\documentclass{article}
\usepackage{amsfonts,color,framed,amsthm}
\usepackage{amsmath}

\def\C{\ensuremath{\mathbb C}}
\def\N{\ensuremath{\mathbb N}}
\def\Z{\ensuremath{\mathbb Z}}
\newcommand{\be}{\begin{equation}}
\newcommand{\ee}{\end{equation}}
\newcommand{\bea}{\begin{eqnarray}}
\newcommand{\eea}{\end{eqnarray}}

\definecolor{shadecolor}{rgb}{0.97, 0.97, .97}

\newtheorem{theorem}{Theorem}
\newtheorem{definition}{Definition}
\newtheorem{proposition}{Proposition}
\newtheorem{lemma}{Lemma}
\newtheorem{corollary}{Corollary}
\theoremstyle{definition}
\newtheorem{remark}{Remark}

\title{Discrete matrix models for partial sums of conformal blocks associated to Painlev\'e transcendents}
\author{F. Balogh\\
\\
Scuola Internazionale Superiore di Studi Avanzati\\
via Bonomea 265, 34136 Trieste, Italy}

\begin{document}
\maketitle
\begin{abstract} A recently formulated conjecture of Gamayun, Iorgov and Lisovyy gives an asymptotic expansion of the Jimbo--Miwa--Ueno isomonodromic $\tau$-function for certain Painlev\'e transcendents. The coefficients in this expansion are given in terms of conformal blocks of a two-dimensional conformal field theory, which can be written as infinite sums over pairs of partitions. In this note a discrete matrix model is proposed on a lattice whose partition function can be used to obtain a multiple integral representation for the length restricted partial sums of the Painlev\'e conformal blocks. This leads to expressions of the partial sums involving H\"ankel determinants associated to the discrete measure of the matrix model, or equivalently, Wronskians of the corresponding moment generating function which is shown to be of the generalized hypergeometric type.
\end{abstract}
\noindent
{\bf Keywords:} Painlev\'e transcendents, isomonodromic $\tau$-functions, discrete matrix models, generalized hypergeometric functions
\section{Introduction and statement of results}
The six Painlev\'e equations $P_{\mathrm{I}} - P_{\mathrm{VI}}$ can be obtained from the isomonodromic deformation equations of $2\times 2$ matrix-valued linear differential equations with rational coefficients \cite{FIKN}.
Recently it was conjectured \cite{conformal_painleve} that  the Jimbo--Miwa--Ueno isomonodromic $\tau$-function \cite{JMU} associated to $P_{\mathrm{VI}}$ can be interpreted as a four-point correlator appearing in a two-dimensional conformal field theory, and this observation was extended in \cite{instanton_combinatorics} to include similar descriptions of $P_{\mathrm{V}}$ and $P_{\mathrm{III}}$ (see also \cite{painleve_connection_conformal_blocks,painleve_conformal_constr_appr, isomonodromic_liouville,its_lisovvy_tykhyy,bershtein_shchechkin}). One of the interesting aspects of these conjectures is that they would imply an asymptotic expansion of the $\tau$-function, in the limit when certain poles in the corresponding linear system coalesce. The asymptotic expansion of the $\tau$-function is written in terms of conformal blocks, which, by the AGT correspondence \cite{AGT}, can be expressed as infinite sums over pairs of partitions.

The aim of this paper is to give a closed formula for length-restricted partial sums (to be defined below) of the Painlev\'e conformal blocks. To define these quantities of interest, let $\vec a$ be a list of complex numbers
\be
\vec a = [a_1,\dots, a_p]
\ee
with $p \geq 0$, and consider the following $\vec{a}$-dependent meromorphic functions in the complex plane:
\begin{align}
\label{eq:P}
P(\vec a;z)&:=\prod_{i=1}^{p}(z+a_i)\\
\label{eq:Ga}
\Gamma(\vec a;z) &:= \prod_{i=1}^{p}\Gamma(z+a_i)\\
\label{eq:G}
G(\vec a;z)&:= \prod_{i=1}^{p}G(z+a_i)\ ,
\end{align}
where $\Gamma(z)$ stands for the gamma function and $G(z)$ is Barnes' $G$-function
\be
G(z+1)=(2\pi)^{z/2}\exp\left(-\frac{1}{2}z(z+1)-\frac{1}{2}\gamma z^{2}\right)\prod_{k=1}^{\infty}\left(\left(1+\frac{z}{k}\right)^{k}\exp\left(-z+\frac{z^2}{2k}\right)\right)
\ee
(also known as double gamma function), satisfying
\be
G(z+1)= \Gamma(z)G(z) \quad \mbox{and} \quad G(1)=1\ .
\ee
By convention, for $p=0$ (if the list is empty), the expressions \eqref{eq:P} to \eqref{eq:G} are defined to be identically $1$, as functions of the complex variable $z$.

Define also the quantities
\be
C(\vec a; \sigma) := \frac{G(\vec a; 1+\sigma)G(\vec a; 1-\sigma)}{G(1+2\sigma)G(1-2\sigma)}\ ,
\ee
that depend meromorphically on the parameter $\sigma$. 

Recall that the weight and the length of a partition $\lambda = (\lambda_1, \lambda_2, \dots)$ are
\be
|\lambda|=\sum_{i}\lambda_i\quad \mbox{and}\quad \quad \ell(\lambda) = \max\{i: \lambda_i >0\}\ ,
\ee
respectively. For a pair of partitions $\lambda$ and $\mu$, let
\begin{align}
{\mathcal B}_{\lambda, \mu}(\vec a;\sigma)
\nonumber
&:=\prod_{(i,j)\in \lambda}\frac{P(\vec a; i-j+\sigma)}{h_{\lambda}^2(i,j)(\lambda'_j+\mu_i-i-j+1+2\sigma)^2}\times\\
\label{eq:B_lambda_mu}
&\quad \times \prod_{(i,j)\in \mu}\frac{P(\vec a; i-j-\sigma)}{h_{\mu}^2(i,j)(\lambda_i+\mu'_j-i-j+1-2\sigma)^2}\ ,
\end{align}
where $\lambda'$ stands for the transposed partition, $(i,j)\in \lambda$ means that $j \leq \lambda_i$, and $h_{\lambda}(i,j)$ is the \emph{hook length}
\be
h_{\lambda}(i,j) = \lambda_i-i+\lambda'_j-j+1
\ee
associated to the box $(i,j)$ in the Young diagram of $\lambda$.

\begin{remark}
\label{rem:decomp}
Note that the coefficient ${\mathcal B}_{\lambda, \mu}(\vec a;\sigma)$ can be decomposed as
\be
{\mathcal B}_{\lambda, \mu}(\vec a;\sigma) = {\mathcal B}_{\lambda, \mu}(\vec [];\sigma)\prod_{(i,j)\in \lambda}P(\vec a; i-j+\sigma)\prod_{(i,j)\in \mu}P(\vec a; i-j-\sigma)\ ,
\ee
i.e., the coefficient ${\mathcal B}_{\lambda, \mu}(\vec a;\sigma)$ can be thought of as a ``dressed" version of the ``bare" coefficient ${\mathcal B}_{\lambda, \mu}(\vec [];\sigma)$.
\end{remark}

The isomonodromic $\tau$-functions for the Painlev\'e equations $P_{\mathrm{VI}}$, $P_{\mathrm{V}}$, $P_{\mathrm{III_{1}'}}$, $P_{\mathrm{III_{2}'}}$, $P_{\mathrm{III_{3}'}}$ are conjectured to have expansions of the general form
\be
\label{eq:tau_expansion}
\tau(t) = f(t)\sum_{n \in \Z}C(\vec a;\sigma+n)s^{n}t^{(\sigma+n)^2}{\mathcal B}(\vec a;\sigma+n; t)\ ,
\ee
 as $t \to 0$. The function ${\mathcal B}(\vec a;\sigma; t)$ is obtained using conformal field theory techniques and it is referred to as a conformal block. The pre-factor $f(t)$ and the list of parameters $\vec a$ depend on the Painlev\'e equation in question, as shown in Table \ref{table:1} (see \cite{instanton_combinatorics} for the details). The parameters $\theta_i$ are associated to the monodromy of the underlying $2\times 2$ Fuchsian ODE, ($\pm\theta_j$ are the eigenvalues of the residue matrices at the simple poles of the equation). The parameters $\sigma$ and $s$ can be thought of as integration constants associated to a given Painlev\'e solution.
\begin{table}[h!]
\begin{tabular}{|c|c|c|}
\hline
\parbox{2.5cm}{\begin{center}Painlev\'e\\transcendent\end{center}} & $\vec{a}$ &$f(t)$\\
\hline
$P_{\mathrm{VI}}$ & $[\theta_t-\theta_0,\theta_t+\theta_0,\theta_1-\theta_{\infty},\theta_1+\theta_{\infty}]$ & $(1-t)^{2\theta_t\theta_1}t^{-\theta_0^2-\theta_t^2}$\\
\hline
$P_{\mathrm{V}}$ & $[\theta_{*},\theta_0-\theta_t,-\theta_0-\theta_t]$ & $e^{-\theta_t t}$\\
\hline
$P_{\mathrm{III_{1}'}}$ & $[\theta_{*},\theta_{\star}]$ & $e^{-\frac{t}{2}}$\\
\hline
$P_{\mathrm{III_{2}'}}$ & $[\theta_{*}]$ & $1$\\
\hline
$P_{\mathrm{III_{3}'}}$ & $[\ ]$ & $1$\\
\hline
\end{tabular}
\label{table:1}
\caption{The pre-factor $f(t)$ and the parameters $\vec{a}$ for the different Painlev\'e equations considered in \cite{instanton_combinatorics}}
\end{table}

 By using the AGT correspondence, the Painlev\'e conformal blocks
can be expressed as infinite sums over pairs of partitions of the form
\be
{\mathcal B}(\vec a;\sigma; t):=\sum_{\lambda, \mu}{\mathcal B}_{\lambda, \mu}(\vec a;\sigma)t^{|\lambda|+|\mu|}\ .
\ee
It is of particular interest \cite{its_lisovvy_tykhyy} to find closed expressions for ${\mathcal B}(\vec a;\sigma; t)$
or to understand the asymptotic behavior of its partial sums.
The goal of the present paper is to show that the \emph{$K$-restricted partial sum}
\be
{\mathcal B}_{K}(\vec a;\sigma; t):=\sum_{\lambda, \mu \atop \ell(\lambda)\leq K, \ell(\mu) \leq K}{\mathcal B}_{\lambda, \mu}(\vec a;\sigma)t^{|\lambda|+|\mu|}\ ,
\ee
which itself is a sum of infinitely many terms, can be written in terms of an associated discrete matrix model. Our main result is based on the following crucial observation:
\begin{shaded}
\begin{theorem}
\label{thm:b}
Let $\lambda$ and $\mu$ be a pair of partitions such that
\be
\ell(\lambda)\leq K\ , \quad \ell(\mu) \leq K
\ee
for some positive integer $K$. Then
\be
\label{eq:b}
{\mathcal B}_{\lambda, \mu}(\vec a;\sigma)= Q(\vec a, K, \sigma)\prod_{1\leq i <j \leq 2K}(x_i-x_j)^{2}\prod_{i=1}^{2K}w(x_i)\ ,
\ee
where
\be
\begin{split}
x_i &= \lambda_i-i+K-\sigma\\
x_{K+i}&= \mu_i-i+K+\sigma 
\end{split}\qquad 1\leq i \leq K\ ,
\ee
the weight $w(z)$ is given by
\be
w(z)\equiv w_{\vec a,K, \sigma}(z):=\frac{1}{\Gamma(\vec a;K-z)\Gamma(z+1+\sigma)^2\Gamma(z+1-\sigma)^2}\ ,
\ee
and the factor $Q(\vec a, K, \sigma)$ has the explicit form
\be
Q(\vec a, K, \sigma) := \frac{G(\vec a;K+1+\sigma)G(\vec a;K+1-\sigma)}{G(\vec a;1+\sigma)G(\vec a;1-\sigma)}\Gamma(2\sigma)^{2K}\Gamma(1-2\sigma)^{2K}\ .
\ee
\end{theorem}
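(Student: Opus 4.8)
The plan is to pass from products over the boxes of $\lambda$ and $\mu$ to products over the $2K$ coordinates $x_1,\dots,x_{2K}$, matching the three pieces into which the target Vandermonde factors,
\begin{equation*}
\prod_{1\le i<j\le 2K}(x_i-x_j)^2=\prod_{1\le i<j\le K}(x_i-x_j)^2\prod_{1\le i<j\le K}(x_{K+i}-x_{K+j})^2\prod_{i,j=1}^{K}(x_i-x_{K+j})^2 .
\end{equation*}
Under $x_i-x_j=(\lambda_i-i)-(\lambda_j-j)$, $x_{K+i}-x_{K+j}=(\mu_i-i)-(\mu_j-j)$ and $x_i-x_{K+j}=(\lambda_i-i)-(\mu_j-j)-2\sigma$, the first block should come from the hook lengths of $\lambda$, the second from those of $\mu$, and the third from the mixed factors; every leftover $\Gamma$ is to be absorbed into $w$ and every partition-independent factor into $Q$.

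First I would remove the $\vec a$-dependence using Remark~\ref{rem:decomp}. Reading each row of the diagram, the content $i-j$ runs over consecutive integers, so for $1\le i\le K$
\begin{equation*}
\prod_{j=1}^{\lambda_i}P(\vec a;i-j+\sigma)=\prod_{k=1}^{p}\frac{\Gamma(a_k+\sigma+i)}{\Gamma(a_k+\sigma+i-\lambda_i)}=\frac{\Gamma(\vec a;\sigma+i)}{\Gamma(\vec a;K-x_i)},
\end{equation*}
and likewise for $\mu$ with $\sigma\mapsto-\sigma$. The denominators reproduce the factor $\Gamma(\vec a;K-z)^{-1}$ of $w$ at $z=x_i$ and $z=x_{K+i}$, while the numerators are partition-independent and telescope through $\Gamma(z)=G(z+1)/G(z)$ into $\prod_{i=1}^{K}\Gamma(\vec a;\sigma+i)=G(\vec a;K+1+\sigma)/G(\vec a;1+\sigma)$ and its $-\sigma$ analogue, i.e.\ exactly the Barnes-$G$ ratio in $Q$. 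This reduces the theorem to the bare coefficient ${\mathcal B}_{\lambda,\mu}([\,];\sigma)$.

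For the bare coefficient the hook lengths are handled by the classical identity
\begin{equation*}
\prod_{(i,j)\in\lambda}h_\lambda(i,j)=\frac{\prod_{i=1}^{K}(\lambda_i-i+K)!}{\prod_{1\le i<j\le K}\big((\lambda_i-i)-(\lambda_j-j)\big)} .
\end{equation*}
Since $(\lambda_i-i+K)!=\Gamma(x_i+1+\sigma)$ and the denominator is the $\lambda$-$\lambda$ Vandermonde, $\prod_{(i,j)\in\lambda}h_\lambda(i,j)^{-2}$ produces that Vandermonde block together with $\prod_i\Gamma(x_i+1+\sigma)^{-2}$; the $\mu$ computation gives the $\mu$-$\mu$ block and $\prod_i\Gamma(x_{K+i}+1-\sigma)^{-2}$.

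The crux, and the step I expect to be the main obstacle, is the mixed factors $\lambda'_j+\mu_i-i-j+1+2\sigma$ and $\lambda_i+\mu'_j-i-j+1-2\sigma$. Writing $\lambda'_j-i$ as a leg length and $\mu_i-j$ as an arm length exhibits them as Nekrasov factors coupling $\lambda$ and $\mu$, and the goal is the product formula
\begin{equation*}
\begin{aligned}
&\prod_{(i,j)\in\lambda}\frac{1}{(\lambda'_j+\mu_i-i-j+1+2\sigma)^{2}}\prod_{(i,j)\in\mu}\frac{1}{(\lambda_i+\mu'_j-i-j+1-2\sigma)^{2}}\\
&\qquad=\Gamma(2\sigma)^{2K}\Gamma(1-2\sigma)^{2K}\,\frac{\prod_{i,j=1}^{K}(x_i-x_{K+j})^{2}}{\prod_{i=1}^{K}\Gamma(x_i+1-\sigma)^{2}\,\Gamma(x_{K+i}+1+\sigma)^{2}} .
\end{aligned}
\end{equation*}
I would prove this by grouping the $\lambda$-boxes by columns, where the factor becomes $(\lambda'_j-j+1+2\sigma)+(\mu_i-i)$ with $i=1,\dots,\lambda'_j$, and reducing each such product to ratios of $\Gamma$'s of argument $(\lambda_i-i)-(\mu_j-j)\pm2\sigma+\Z$; the delicate bookkeeping is the interplay between the plateaux of $\lambda'$ and the empty rows $\mu_i=0$ (and symmetrically $\lambda_i=0$). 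The off-diagonal data assemble into the cross Vandermonde and the two remaining $\Gamma$-factors of $w$, while the $K$ ``diagonal'' pairs of boundary $\Gamma$'s, of arguments differing by an integer, collapse under $\Gamma(s)\Gamma(1-s)=\pi/\sin\pi s$ into the constant $\Gamma(2\sigma)^{2K}\Gamma(1-2\sigma)^{2K}$ --- the integer shift flipping only a sign that the square removes, exactly as the $K=1$ reflection of $\Gamma(1+m)\Gamma(1-m)$ with $m=\lambda_1-\mu_1-2\sigma$ shows. A useful consistency check on the exponent $2K$ is that at $\sigma=0$ the $K$ diagonal cross factors $(x_i-x_{K+i})^2$ vanish to order $2K$, cancelling the order-$2K$ pole of $\Gamma(2\sigma)^{2K}$. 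Collecting the three blocks, the dressing, and the constants then yields \eqref{eq:b}, and the concluding check is that no $\Gamma$ escapes $w$ and that the surviving constant is precisely $Q(\vec a,K,\sigma)$.
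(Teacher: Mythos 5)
Your proposal follows essentially the same route as the paper: reduce to the bare coefficient by telescoping the $P(\vec a;\cdot)$ factors into $\Gamma(\vec a;K-x_i)^{-1}$ times the Barnes-$G$ ratio, convert the hook products into the $\lambda$--$\lambda$ and $\mu$--$\mu$ Vandermonde blocks via the classical identity, and turn the mixed Nekrasov factors into the cross block times the remaining $\Gamma$'s and the constant $\Gamma(2\sigma)^{2K}\Gamma(1-2\sigma)^{2K}$; every target identity you state, including the mixed-factor formula, is exactly the paper's intermediate result rewritten in the $x$-coordinates. The one place you leave open --- the ``delicate bookkeeping'' in the mixed product --- is precisely what the paper's Proposition~\ref{prop:interact_half} resolves, and it does so by grouping row-wise rather than column-wise: for fixed $i$ the range $1\leq j\leq\lambda_i$ is partitioned automatically by the plateaux $\lambda_{r+1}<j\leq\lambda_r$ of $\lambda$ itself (on which $\lambda_j'=r$), so the telescoping needs no interplay with the plateaux of $\mu$ at all; the unsquared identity is proved once for a generic shift $x$ and then applied twice with $x=\pm2\sigma$ and $\lambda\leftrightarrow\mu$ swapped, after which the boundary $\Gamma$'s pair up via $\Gamma(n+x)\Gamma(-n-x)=\tfrac{(-1)^{n+1}}{n+x}\Gamma(x)\Gamma(1-x)$, exactly the reflection mechanism you anticipate. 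So nothing in your plan would fail, but if you carry it out column-wise you will have to intersect $[1,\lambda_j']$ with the constancy intervals of $\mu$, which is avoidable by the row-wise choice.
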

\end{shaded}
Define the discrete measure
\be
\nu\equiv \nu_{\vec a, K,\sigma;q} := q\sum_{k=0}^{\infty}w_{\vec a,K, \sigma}(k+\sigma)\delta_{k+\sigma}+q^{-1}\sum_{k=0}^{\infty}w_{\vec a,K, \sigma}(k-\sigma)\delta_{k-\sigma}\ ,
\ee
which depends on an extra complex parameter $q$ that will play an important book-keeping role in what follows. Here $\delta_\alpha$ stands for the Dirac measure concentrated at $\alpha\in \C$, and integrals with respect to $\nu$ are understood as
\be
\int f(z)d\nu(z) = q\sum_{k=0}^{\infty}w_{\vec a,K, \sigma}(k+\sigma)f(k+\sigma)+q^{-1}\sum_{k=0}^{\infty}w_{\vec a,K, \sigma}(k-\sigma)f(k-\sigma)\ .
\ee
By its definition, $\nu_{\vec a, K,\sigma;q}$ is supported on the union of two shifted half-lattices of positive integers:
\be
\operatorname{supp}(\nu_{\vec a, K,\sigma;q}) = (\N_{0}+\sigma)\cup (\N_{0}-\sigma)\ .
\ee
The moment generating function
\be
\psi(u) := \int e^{uz} \nu_{\vec a, K, \sigma;q}(z)
\ee
 can be written (see Lemma \ref{lemma:mgf} below) as a linear combination of generalized hypergeometric functions evaluated at $(-1)^pe^{u}$ as
\begin{align}
\nonumber
\psi(u)&=\frac{qe^{u \sigma}}{\Gamma(\vec a;K-\sigma)\Gamma(1+2\sigma)^2}
{}_p F_3\left(
\begin{array}{c} -\vec a-K+1+\sigma\\1+2\sigma\ \ 1+2\sigma\ \ 1
\end{array}; (-1)^{p}e^{u} \right)\\
&+\frac{q^{-1}e^{-u\sigma}}{\Gamma(\vec a;K+\sigma)\Gamma(1-2\sigma)^2}
{}_p F_3\left(
\begin{array}{c} -\vec a-K+1-\sigma\\1-2\sigma\ \ 1-2\sigma\ \ 1
\end{array}; (-1)^{p}e^{u} \right)\ ,
\end{align}
with the use of the condensed notation
\be
\label{eq:condensed}
-\vec a+s = [-a_1+s,\dots,-a_p+s]\ .
\ee
In particular, this means that  for $p \leq 3$ the moment generating function is an entire function of $u$ and hence all moments of $\nu$ are finite. If $p=4$ then
\be
|t| = |e^u|<1
\ee
needs to be assumed in general to have convergent expressions for the moments. For $p>4$ the moments do not exist unless there is a nonpositive integer appearing in the list $-\vec a+K+1+\sigma$, which, in effect, truncates the infinite sum.

The main result of the paper is the expression of the $K$-restricted partial sum ${\mathcal B}_{K}(\vec a;\sigma; t)$
in terms of the partition function of the discrete matrix model associated to $\nu_{K,\vec{a},\sigma;q}$:
\begin{shaded}
\begin{theorem}
\label{thm:matrix_integral}
The $K$-restricted partial sum ${\mathcal B}_{K}(\vec a;\sigma; t)$ is given by the expression
\begin{align}
\nonumber
&\sum_{\ell(\lambda)\leq K\atop \ell(\mu)\leq K}{\mathcal B}_{\lambda, \mu}(\vec a;\sigma)t^{|\lambda|+|\mu|}=\frac{Q(\vec a, K, \sigma)}{t^{K(K-1)}} \times\\
&\label{eq:restricted_sum_integral}
\times \frac{1}{(2K)!}\frac{1}{2\pi i} \oint_{|q|=1}\left[\underbrace{\int\!\cdots\! \int}_{2K} \prod_{1\leq i<j\leq 2K}(x_i-x_j)^{2}t^{\sum_{i=1}^{2K}{x_i}}\prod_{i=1}^{2K}d\nu(x_i)\right]\frac{dq}{q}
\end{align}
for the discrete measure $\nu=\nu_{K,\vec{a},\sigma;q}$, provided that $t$ is chosen such that the multiple integral in \eqref{eq:restricted_sum_integral} is convergent.
\end{theorem}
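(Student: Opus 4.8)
The proof rests entirely on Theorem \ref{thm:b}, which already expresses each summand $\mathcal{B}_{\lambda,\mu}(\vec a;\sigma)$ as a squared Vandermonde times a product of weights evaluated at the points $x_i$. The plan is therefore to recognize the right-hand side of \eqref{eq:restricted_sum_integral} as a generating object that, term by term, reproduces these products. First I would record the elementary dictionary between the summation data: a partition $\lambda$ with $\ell(\lambda)\leq K$, padded with zeros to $(\lambda_1,\dots,\lambda_K)$, corresponds bijectively to the strictly decreasing sequence $k_i := \lambda_i + K - i$ of nonnegative integers, and likewise $m_i := \mu_i + K - i$ for $\mu$; under the substitution of Theorem \ref{thm:b} these are precisely the lattice coordinates $x_i = k_i - \sigma \in \N_0 - \sigma$ and $x_{K+i} = m_i + \sigma \in \N_0 + \sigma$. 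A short computation then gives $\sum_{i=1}^{2K} x_i = |\lambda| + |\mu| + K(K-1)$, since $\sum_{i=1}^{K}(K-i) = K(K-1)/2$ and the $\pm\sigma$ shifts cancel in pairs. This identity is exactly what produces the prefactor $t^{-K(K-1)}$: substituting it together with Theorem \ref{thm:b} turns the $K$-restricted partial sum into
\be
\sum_{\ell(\lambda)\leq K\atop \ell(\mu)\leq K}\mathcal{B}_{\lambda,\mu}(\vec a;\sigma)t^{|\lambda|+|\mu|} = \frac{Q(\vec a,K,\sigma)}{t^{K(K-1)}}\, J_{\mathrm{ord}},
\ee
where $J_{\mathrm{ord}}$ denotes the sum of $\prod_{i<j}(x_i-x_j)^2\, t^{\sum_i x_i}\prod_i w(x_i)$ over all pairs of strictly decreasing sequences. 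It then remains to identify $J_{\mathrm{ord}}$ with the normalized multiple integral.

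To do so I would expand the multiple integral directly. Writing $d\nu = d\nu_+ + d\nu_-$, where $\nu_\pm$ is the part of $\nu$ supported on $\N_0\pm\sigma$ and carries the bookkeeping factor $q^{\pm 1}$, the product $\prod_{i=1}^{2K}d\nu(x_i)$ breaks into $2^{2K}$ terms indexed by the subset $S\subseteq\{1,\dots,2K\}$ of variables assigned to the $+\sigma$ lattice; such a term carries the monomial $q^{|S|-(2K-|S|)} = q^{2|S|-2K}$. Since $\frac{1}{2\pi i}\oint_{|q|=1}q^{n}\frac{dq}{q} = \delta_{n,0}$, the contour integral in $q$ projects onto the \emph{balanced} configurations $|S| = K$, that is, onto those with exactly $K$ points on each shifted half-lattice; this is precisely the structural feature displayed by Theorem \ref{thm:b}. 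Because the integrand $\prod_{i<j}(x_i-x_j)^2\, t^{\sum x_i}\prod_i w(x_i)$ is symmetric in all $2K$ variables, every one of the $\binom{2K}{K}$ balanced subsets contributes the same amount as the reference choice $S=\{K+1,\dots,2K\}$, obtained by relabelling the dummy variables.

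The final, and most delicate, step is the passage from unordered to strictly ordered sequences. For the reference subset the variables $x_1,\dots,x_K$ range independently over $\N_0-\sigma$ and $x_{K+1},\dots,x_{2K}$ over $\N_0+\sigma$; the squared Vandermonde annihilates every configuration in which two points coincide, so (for generic $\sigma$, where the two half-lattices are disjoint) only tuples with distinct coordinates on each lattice survive. Using the symmetry of the summand within each block of $K$ variables, each unordered set of coordinates is counted $K!$ times, so the reference sum equals $(K!)^2 J_{\mathrm{ord}}$. Combining this with the $\binom{2K}{K}$ equal contributions yields a total factor $\binom{2K}{K}(K!)^2 = (2K)!$, which cancels the normalizing $1/(2K)!$ and gives $\frac{1}{(2K)!}\frac{1}{2\pi i}\oint I\,\frac{dq}{q} = J_{\mathrm{ord}}$, completing the identification. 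I expect the main obstacle to be bookkeeping rather than conceptual: one must justify interchanging the infinite lattice sums with the finite $q$-contour integral, which is exactly where the hypothesis that $t$ lies in the domain of absolute convergence of the multiple integral is used, and one should either invoke genericity of $\sigma$ to keep the two half-lattices disjoint or simply observe that cross-lattice coincidences are likewise killed by the Vandermonde.
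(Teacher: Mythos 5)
Your proposal is correct and follows essentially the same route as the paper's own proof: the same change of variables $x_i=\lambda_i-i+K-\sigma$, $x_{K+i}=\mu_i-i+K+\sigma$, the same identity $\sum_i x_i=|\lambda|+|\mu|+K(K-1)$, the symmetrization factor $(K!)^2$ from passing between ordered and unordered configurations, and the $q$-contour integral selecting the $\binom{2K}{K}$ balanced subsets so that $\binom{2K}{K}(K!)^2=(2K)!$. The only difference is presentational (you unfold the integral into the sum rather than building the sum up into the integral), and your explicit remarks on cross-lattice coincidences and on interchanging the lattice sums with the $q$-integral are slightly more careful than the paper's.
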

\end{shaded}
\begin{remark}
The contour integral in the variable $q$ in \eqref{eq:restricted_sum_integral} selects only those terms from the matrix model partition function that correspond to balanced  configurations, i.e.,  those with an equal number of points (eigenvalues) on  $\N_{0}+\sigma$ and $\N_{0}-\sigma$.
\end{remark}
\begin{remark} Note that the change of coordinates 
\be
t=e^{u}
\ee
allows $u$ to be interpreted as the first KP-Toda time as deformation parameter in the multiple integral \eqref{eq:restricted_sum_integral}.
\end{remark}
\begin{remark}
The discrete matrix model above is related to the model of Nekrasov and Okounkov for random partitions in {\rm\cite{nekrasov_okounkov}}, but the explicit partial summation formula \eqref{eq:restricted_sum_integral} does not appear there.
\end{remark}
As it is well-known, the multiple integral \eqref{eq:restricted_sum_integral} can be written in terms of the $2K\times 2K$ H\"ankel determinant built from the moments of the measure $\nu$, which can also be written as a Wronskian involving the moment generating function $\psi(u)$:
\begin{shaded}
\begin{corollary}
\label{cor:hankel}
The $K$-restricted partial sum ${\mathcal B}_K(\vec a;\sigma; e^{u})$ can be expressed as 
\begin{align}
&\sum_{\ell(\lambda)\leq K\atop \ell(\mu)\leq K}{\mathcal B}_{\lambda, \mu}(\vec a;\sigma)e^{u(|\lambda|+|\mu|)}\\
&\quad=\frac{Q(\vec a,K,\sigma)}{t^{K(K-1)}}
\frac{1}{2\pi i} \oint_{|q|=1}\left[\det\left(\psi^{(i+j-2)}(u)\right)_{i,j=1}^{2K}\right]\frac{dq}{q}\ ,
\end{align}
provided that $u$ is given such that all the derivatives $\psi^{k}(u)$ exist for $0\leq k\leq 2K-2$.
\end{corollary}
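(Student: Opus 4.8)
The plan is to recognize the inner $2K$-fold integral in \eqref{eq:restricted_sum_integral} as a Hankel determinant of moments, and then to identify those moments with derivatives of $\psi$; the contour integral in $q$ and the prefactor $Q(\vec a,K,\sigma)/t^{K(K-1)}$ are simply carried along unchanged from Theorem \ref{thm:matrix_integral}. First I would fix $q$ on the unit circle and treat the bracketed integral for the single measure $\nu=\nu_{\vec a,K,\sigma;q}$. Writing $t=e^{u}$, the factor $t^{\sum_{i=1}^{2K}x_i}$ factorizes as $\prod_{i=1}^{2K}e^{ux_i}$, so it can be absorbed into the integrator: set $d\rho(x):=e^{ux}\,d\nu(x)$. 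The bracket then becomes the symmetric $2K$-fold integral of the squared Vandermonde against $\rho$,
\[
\frac{1}{(2K)!}\underbrace{\int\!\cdots\!\int}_{2K}\prod_{1\le i<j\le 2K}(x_i-x_j)^{2}\prod_{i=1}^{2K}d\rho(x_i).
\]

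The second step is the classical Andr\'eief (Heine--Gram) identity. Writing $\prod_{i<j}(x_i-x_j)=\det\!\left(x_j^{\,i-1}\right)_{i,j=1}^{2K}$ and squaring, the integrand becomes a product of two determinants, and Andr\'eief's formula collapses the symmetrized multiple integral to a single determinant of pairwise moments,
\[
\frac{1}{(2K)!}\underbrace{\int\!\cdots\!\int}_{2K}\prod_{i<j}(x_i-x_j)^{2}\prod_{i}d\rho(x_i)
=\det\!\left(\int x^{\,i+j-2}\,d\rho(x)\right)_{i,j=1}^{2K},
\]
i.e. the Hankel determinant built from the moments $m_k(u):=\int x^{k}\,d\rho(x)=\int x^{k}e^{ux}\,d\nu(x)$ of $\rho$.

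The third step identifies these moments with derivatives of the moment generating function. Since $\psi(u)=\int e^{ux}\,d\nu(x)$, differentiating under the integral sign (equivalently, term-by-term in the two lattice sums defining $\nu$) gives $m_k(u)=\psi^{(k)}(u)$, so that the $(i,j)$ entry above is $\psi^{(i+j-2)}(u)$ and the bracket equals $\det\!\big(\psi^{(i+j-2)}(u)\big)_{i,j=1}^{2K}$. Reinserting this into \eqref{eq:restricted_sum_integral}, while leaving the $q$-contour integral intact, yields the claimed formula. The identification of this Hankel determinant with a Wronskian of $\psi$ is then immediate: taking $g_k:=\psi^{(k-1)}$ one has $g_k^{(i-1)}=\psi^{(i+k-2)}$, whence $\det\!\big(\psi^{(i+j-2)}\big)_{i,j=1}^{2K}=W\big(\psi,\psi',\dots,\psi^{(2K-1)}\big)$.

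The only point requiring care is the justification of differentiating under the integral, that is, the existence and finiteness of the moments $m_k(u)=\psi^{(k)}(u)$ up to the order $4K-2$ demanded by the $2K\times 2K$ Hankel determinant. This is what the hypothesis of the corollary secures: wherever the relevant derivatives of $\psi$ exist, the generalized hypergeometric representation of $\psi$ established in Lemma \ref{lemma:mgf} shows it to be analytic there, so all the required derivatives exist, all moments are finite, and the term-by-term differentiation of the lattice sums is legitimate. With convergence of the original multiple integral already assumed in Theorem \ref{thm:matrix_integral}, Andr\'eief's identity applies verbatim, so I expect no genuine obstacle beyond keeping track of this domain of validity.
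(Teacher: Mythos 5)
Your proof is correct and follows exactly the route the paper intends (the paper gives no explicit proof, merely invoking the well-known Andr\'eief/Heine reduction of the squared-Vandermonde multiple integral to a H\"ankel determinant of moments, which is precisely your argument). Your observation that the $2K\times 2K$ determinant actually requires derivatives up to order $4K-2$, not just $2K-2$ as in the corollary's stated hypothesis, is a fair catch, and you resolve it correctly via the analyticity of $\psi$ furnished by Lemma \ref{lemma:mgf}.
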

\end{shaded}
\begin{remark} A similar type of partial summation appears in {\rm \cite{bonelli_tanzini_zhao}}, where vortex partition functions lead to more general
AGT-type sums over pairs of partitions which are restricted to column partitions. It is shown in {\rm \cite{bonelli_tanzini_zhao}} that the corresponding columns-only partial sum can be written in terms of generalized hypergeometric functions. Their result is analogous to $K=1$ in our more general setting (up to conjugation of partitions).
\end{remark}

\noindent
{\bf Plan of the paper.} In Sec.~\ref{sec:interaction} it is shown how the combinatorial expression 
 defining ${\mathcal B}_{\lambda,\mu}(\vec a;\sigma)$ in \eqref{eq:B_lambda_mu} can be interpreted, after a suitable change of coordinates, as the exponential of the logarithmic energy of a coupled system of two sets of interacting particles in the complex plane (see Eq.~\eqref{eq:b}) whose configuration is labelled by the pair partitions $\lambda$ and $\mu$. The appearence of the Vandermonde factor in \eqref{eq:b} motivates the definition of the discrete matrix model introduced in Sec.~\ref{sec:matrix_model}. The simplest ``bare'' case of $P_{\mathrm{III}_3}$ is discussed briefly in Sec.~\ref{sec:PIII}, emphasizing the simple form the moment generating function $\psi(u)$ which does not depend on the maximal partition length $K$ in this case.
 
\section{The interaction of partitions in terms of particle coordinates}
\label{sec:interaction}
In this section the different ingredients of the coefficient ${\mathcal B}_{\lambda,\mu}(\vec a;\sigma)$ defined in  \eqref{eq:B_lambda_mu} will be rewritten step-by-step, using an alternative parametrization of the partitions $\lambda$ and $\mu$, and this leads to the proof of Theorem \ref{thm:b}.
\begin{definition} The \emph{particle locations} associated to $\lambda$ and $\mu$ are given by
\begin{align}
l_i &:=\lambda_i-i \qquad i \geq 1\\
m_i &:=\mu_i-i \qquad i \geq 1\ .
\end{align}
\end{definition}
It is easy to see that $l_1 > l_2 > \dots$ and $l_i = -i$ for sufficiently large $i$. Such sequences are in $1-1$ correspondence with partitions.
\begin{shaded}
\begin{proposition}
\label{prop:interact_half}
For a pair of partitions $\lambda$ and $\mu$ satisfying  $\ell(\lambda)\leq K$ and  $\ell(\mu)\leq K$, the following identity holds:
\be
\label{eq:interaction_term}
\prod_{(i,j)\in \lambda}(\lambda'_j+\mu_i-i-j+1+x)= \prod_{i=1}^{K}\frac{\Gamma(m_i+K+1+x)}{\Gamma(m_i-l_i+x)}\prod_{1\leq i\leq j\leq K}\frac{1}{(m_i-l_j+x)}\ .
\ee
\end{proposition}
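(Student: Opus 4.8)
The plan is to rewrite the left-hand side in the particle coordinates, reduce \eqref{eq:interaction_term} to a single-variable factorization for each row index $i$, and then match two monic polynomials by identifying their roots through the complementarity between the particle coordinates of $\lambda$ and those of its transpose. First I would use $\mu_i - i = m_i$ to write the factor attached to the box $(i,j)$ as $\lambda'_j - j + 1 + m_i + x$; grouping boxes by rows and abbreviating the particle coordinates of $\lambda'$ by $l'_j := \lambda'_j - j$, the left-hand side becomes
\[
\prod_{i=1}^{K}\ \prod_{j=1}^{\lambda_i}\bigl(m_i + x + l'_j + 1\bigr)\ .
\]
Both sides of \eqref{eq:interaction_term} are now products over $i = 1,\dots,K$ whose $i$-th factor depends only on the single variable $m_i$ (with $x$ a parameter), so it suffices to establish the factorization for each $i$ as an identity of functions of the indeterminate $y := m_i + x$.

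For fixed $i$ I would rewrite the right-hand factor using the Pochhammer identity $\Gamma(y+K+1)/\Gamma(y-l_i) = \prod_{t=-l_i}^{K}(y+t)$, so that the row-$i$ identity to be proved reads
\[
\prod_{j=1}^{\lambda_i}\bigl(y + l'_j + 1\bigr)\ =\ \Bigl(\prod_{t=-l_i}^{K}(y+t)\Bigr)\,\prod_{j=i}^{K}\frac{1}{\,y - l_j\,}\ .
\]
A degree count shows both sides are monic polynomials in $y$ of degree $l_i + i = \lambda_i$: the poles $y = l_j$ for $i \le j \le K$ all lie among the zeros $y \in \{-K,\dots,l_i\}$ of the numerator and hence cancel. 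It then remains only to match the two root multisets.

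The zeros of the left-hand side are $y = -l'_j - 1$ for $1 \le j \le \lambda_i$, that is, the $\lambda_i$ smallest elements of the increasing set $\{-l'_j-1 : j \ge 1\}$. Here I would invoke the Maya-diagram complementarity
\[
\{\,l_i : i \ge 1\,\}\ \sqcup\ \{\,-l'_j - 1 : j \ge 1\,\}\ =\ \Z\ ,
\]
which, combined with the hypothesis $\ell(\lambda)\le K$ (so that $l_i \ge -K$ for $i\le K$ while $l_i = -i < -K$ for $i > K$), yields $\{-l'_j-1:j\ge1\} = \{n\in\Z: n\ge -K\}\setminus\{l_1,\dots,l_K\}$. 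Its $\lambda_i$ smallest elements are precisely those not exceeding $l_i$, namely $\{-K,\dots,l_i\}\setminus\{l_i,l_{i+1},\dots,l_K\}$, and this is exactly the zero set of the right-hand side after the cancellation noted above. Equality of two monic polynomials of the same degree with the same roots closes the row-$i$ identity, and taking the product over $i$ proves the proposition.

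The mechanical parts are the Pochhammer rewriting and the degree bookkeeping; the crux is the combinatorial argument of the last paragraph. The main obstacle is to establish the complementarity relation between the particle coordinates of $\lambda$ and $\lambda'$ and to verify that, under $\ell(\lambda)\le K$, the poles coming from $\prod_{j=i}^{K}(y-l_j)^{-1}$ cancel cleanly against part of the numerator so that exactly the $\lambda_i$ smallest complementary coordinates are reproduced. I would derive the complementarity from the beta-set (Maya-diagram) description of a partition, and would separately check the degenerate cases $\lambda_i = 0$ (an empty row, where both factors reduce to $1$) and $l_K = -K$.
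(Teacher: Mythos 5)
Your argument is correct, but it takes a genuinely different route from the paper. The paper also reduces to a single row $i$, but then proves the row identity by a direct telescoping computation: it splits the product over $j\in\{1,\dots,\lambda_i\}$ into blocks $\lambda_{r+1}+1\le j\le \lambda_r$ on which $\lambda'_j=r$ is constant, writes each linear factor as a ratio $\Gamma(r+m_i-j+2+x)/\Gamma(r+m_i-j+1+x)$, and telescopes twice (first in $j$, then in $r$) to produce the ratio $\Gamma(m_i+\ell(\lambda)+1+x)/\Gamma(m_i-l_i+x)$ together with the factors $(m_i-l_r+x)^{-1}$, finally padding the range from $\ell(\lambda)$ up to $K$ using $l_j=-j$ for $j>\ell(\lambda)$. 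You instead recast the row identity as an equality of two monic polynomials of degree $\lambda_i$ in $y=m_i+x$ and match their root multisets via the Maya-diagram complementarity $\{l_i : i\ge 1\}\sqcup\{-l'_j-1 : j\ge 1\}=\Z$. Both are complete proofs; the paper's is entirely self-contained (only the functional equation of $\Gamma$ is used), whereas yours outsources the combinatorial content to the standard beta-set complementarity in exchange for a more conceptual picture of where the zeros and poles of the right-hand side come from and why the cutoff $K$ enters only through truncating the complementary set at $-K$. Your bookkeeping checks out: the poles $y=l_j$, $i\le j\le K$, are distinct and lie in $\{-K,\dots,l_i\}$, and $\{-K,\dots,l_i\}\setminus\{l_i,\dots,l_K\}$ has exactly $l_i+i=\lambda_i$ elements, which are precisely the $\lambda_i$ smallest members of the complement; so the only ingredient you must actually supply is the complementarity lemma itself, which is standard.
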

\end{shaded}
\begin{proof} Let $1\leq i \leq \ell(\lambda)$ and consider the product of factors associated to the $i$th row of $\lambda$:
\be
\prod_{j=1}^{\lambda_i}(\lambda'_j+\mu_i-i-j+1+x) = \prod_{r=i}^{\ell(\lambda)}\prod_{j=\lambda_{r+1}+1}^{\lambda_r}(\lambda'_j+\mu_i-i-j+1+x)\ .
\ee
If $\lambda_{r+1}+1\leq  j\leq \lambda_r$ then $\lambda_j'=r$ and hence
\begin{align}
\prod_{j=\lambda_{r+1}+1}^{\lambda_r}(\lambda'_j+\mu_i-i-j+1+x) &= \prod_{j=\lambda_{r+1}+1}^{\lambda_r}(r+m_i-j+1+x)\\
&= \prod_{j=\lambda_{r+1}+1}^{\lambda_r}\frac{\Gamma(r+m_i-j+2+x)}{\Gamma(r+m_i-j+1+x)}\\
&= \frac{\Gamma(r+m_i-\lambda_{r+1}+1+x)}{\Gamma(r+m_i-\lambda_r+1+x)}\\
&= \frac{1}{m_i-l_r+x}\frac{\Gamma(m_i-l_{r+1}+x)}{\Gamma(m_i-l_r+x)}\ .
\end{align}
Therefore we have
\begin{align}
\prod_{j=1}^{\lambda_i}(\lambda'_j+\mu_i-i-j+1+x) &= \prod_{r=i}^{\ell(\lambda)}\frac{1}{m_i-l_r+x}\frac{\Gamma(m_i-l_{r+1}+x)}{\Gamma(m_i-l_r+x)}\\
&= \frac{\Gamma(m_i+\ell(\lambda)+1+x)}{\Gamma(m_i-l_i+x)}\prod_{r=i}^{\ell(\lambda)}\frac{1}{m_i-l_r+x}\ ,
\end{align}
where we used  $l_{\ell(\lambda)+1} = -\ell(\lambda)-1$. It is easy to see that
\be
\frac{\Gamma(m_i+\ell(\lambda)+1+x)}{\Gamma(m_i-l_i+x)}\prod_{r=i}^{\ell(\lambda)}\frac{1}{m_i-l_r+x}=\frac{\Gamma(m_i+K+1+x)}{\Gamma(m_i-l_i+x)}\prod_{i\leq j \leq K}\frac{1}{m_i-l_j+x}
\ee
for any $K \geq \ell(\lambda)$ because $l_j = -j$ for $j >\ell(\lambda)$. The full product over all boxes of the Young diagram of $\lambda$ is therefore equal to
\be
\prod_{(i,j)\in \lambda}(\lambda'_j+\mu_i-i-j+1+x)= \prod_{i=1}^{\ell(\lambda)}\left[
\frac{\Gamma(m_i+K+1+x)}{\Gamma(m_i-l_i+x)}\prod_{i\leq j \leq K}\frac{1}{m_i-l_j+x}
\right]\ .
\ee
To conclude \eqref{eq:interaction_term}, it is enough to note that if $i >\ell(\lambda)$ we have
\be
\frac{\Gamma(m_i+K+1+x)}{\Gamma(m_i-l_i+x)}\prod_{i\leq j \leq K}\frac{1}{m_i-l_j+x}=1\ .
\ee

\end{proof}
\begin{shaded}
\begin{proposition} Given a pair of partitions $\lambda,\mu$ and a positive integer $K$ such that 
\be
\ell(\lambda) \leq K\ , \quad \ell(\mu)\leq K\ ,
\ee
the following identity holds:
\begin{align}
\nonumber
&\frac{1}{\prod_{(i,j)\in \lambda}(\lambda'_j+\mu_i-i-j+1+x)^2\prod_{(i,j)\in \mu}(\mu'_j+\lambda_i-i-j+1-x)^2}\\
\label{eq:full_interaction}
&=\prod_{i=1}^{K}\frac{\Gamma(x)^2\Gamma(1-x)^2}{\Gamma(m_i+K+1+x)^2\Gamma(l_i+K+1-x)^2}\prod_{i, j=1}^{K}(m_i-l_j+x)^2\ .
\end{align}
\end{proposition}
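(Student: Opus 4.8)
The plan is to apply Proposition \ref{prop:interact_half} twice and then reconcile the resulting Gamma-function and rational factors. First I would use Proposition \ref{prop:interact_half} directly on the $\lambda$-product to rewrite $\prod_{(i,j)\in\lambda}(\lambda'_j+\mu_i-i-j+1+x)$ in terms of the particle coordinates. Then I would observe that the $\mu$-product $\prod_{(i,j)\in\mu}(\mu'_j+\lambda_i-i-j+1-x)$ is obtained from the $\lambda$-product by the substitution $\lambda\leftrightarrow\mu$ (equivalently $l_i\leftrightarrow m_i$) together with $x\to -x$, so Proposition \ref{prop:interact_half} applies to it as well and yields $\prod_{i=1}^{K}\frac{\Gamma(l_i+K+1-x)}{\Gamma(l_i-m_i-x)}\prod_{1\leq i\leq j\leq K}\frac{1}{l_i-m_j-x}$. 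Squaring both identities, taking reciprocals, and multiplying them then produces exactly the left-hand side of the claimed formula.

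Next I would collect the three types of factors that appear. The denominators $\Gamma(m_i+K+1+x)^2$ and $\Gamma(l_i+K+1-x)^2$ already match the target verbatim, so no work is needed there. For the rational factors I would use $l_i-m_j-x=-(m_j-l_i+x)$, so that squaring removes the sign and relabelling $(i,j)\mapsto(j,i)$ turns the triangular product $\prod_{i\leq j}(l_i-m_j-x)^2$ coming from the $\mu$-side into $\prod_{j\leq i}(m_i-l_j+x)^2$. Multiplying this against the triangular product $\prod_{i\leq j}(m_i-l_j+x)^2$ coming from the $\lambda$-side covers every pair $(i,j)$, but counts the diagonal $i=j$ twice; hence I would record the bookkeeping identity $\prod_{i\leq j}(m_i-l_j+x)^2\,\prod_{j\leq i}(m_i-l_j+x)^2=\prod_{i,j=1}^{K}(m_i-l_j+x)^2\cdot\prod_{i=1}^{K}(m_i-l_i+x)^2$, isolating a spurious diagonal factor.

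The heart of the argument is to absorb this diagonal factor into the remaining numerator Gamma functions. Writing $w_i:=m_i-l_i+x$, the two applications of Proposition \ref{prop:interact_half} contribute $\prod_i\Gamma(w_i)^2\Gamma(-w_i)^2$, and together with the leftover diagonal factor $\prod_i w_i^2$ this becomes $\prod_i\Gamma(w_i)^2\,\bigl(w_i\Gamma(-w_i)\bigr)^2$. I would then invoke the functional equation $\Gamma(1-w)=-w\,\Gamma(-w)$ to rewrite $\bigl(w_i\Gamma(-w_i)\bigr)^2=\Gamma(1-w_i)^2$, so that the combination collapses to $\prod_i\bigl(\Gamma(w_i)\Gamma(1-w_i)\bigr)^2$. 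Finally the reflection formula $\Gamma(w)\Gamma(1-w)=\pi/\sin(\pi w)$, combined with the crucial observation that $m_i-l_i=\mu_i-\lambda_i\in\Z$ so that $\sin(\pi w_i)=(-1)^{\mu_i-\lambda_i}\sin(\pi x)$, gives $\bigl(\Gamma(w_i)\Gamma(1-w_i)\bigr)^2=\pi^2/\sin^2(\pi x)=\Gamma(x)^2\Gamma(1-x)^2$, the sign disappearing upon squaring. Assembling the three pieces reproduces the right-hand side exactly.

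I expect the diagonal bookkeeping to be the main obstacle: the two triangular products overlap precisely on the diagonal, and the resulting extra factor $\prod_i(m_i-l_i+x)^2$ must be matched against the numerator Gamma functions through the reflection formula rather than simply discarded. Keeping the signs straight in $l_i-m_j-x=-(m_j-l_i+x)$ and in $\Gamma(1-w)=-w\,\Gamma(-w)$ — both harmless only because every factor enters squared — is the one place where a careless computation would introduce an error.
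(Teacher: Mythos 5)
Your proposal is correct and follows essentially the same route as the paper: apply Proposition \ref{prop:interact_half} to each product (with $\lambda\leftrightarrow\mu$, $x\to-x$ for the second), merge the two triangular products into $\prod_{i,j}(m_i-l_j+x)^2$ times the diagonal factor $\prod_i(m_i-l_i+x)^2$, and absorb that diagonal together with $\Gamma(m_i-l_i+x)^2\Gamma(l_i-m_i-x)^2$ into $\Gamma(x)^2\Gamma(1-x)^2$. The only cosmetic difference is that you justify the final Gamma identity via the reflection formula, whereas the paper derives the equivalent identity $\Gamma(n+x)\Gamma(-n-x)=\tfrac{(-1)^{n+1}}{n+x}\Gamma(x)\Gamma(1-x)$ from the recurrence relation.
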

\end{shaded}
\begin{proof} Proposition \ref{prop:interact_half} implies that
\begin{align}
\nonumber
&\frac{1}{\prod_{(i,j)\in \lambda}(\lambda'_j+\mu_i-i-j+1+x)^2\prod_{(i,j)\in \mu}(\mu'_j+\lambda_i-i-j+1-x)^2}\\
&=\prod_{i=1}^{K}\frac{\Gamma(m_i-l_i+x)^2\Gamma(l_i-m_i-x)^2(m_i-l_i+x)^2}{\Gamma(m_i+K+1+x)^2\Gamma(l_i+K+1-x)^2}\prod_{1\leq i ,j\leq K}(m_i-l_j+x)^2\ .
\end{align}
The formula \eqref{eq:full_interaction} now follows from the simple identity
\be
\Gamma(n+x)\Gamma(-n-x) = \frac{(-1)^{n+1}}{(n+x)}\Gamma(x)\Gamma(1-x)\ , \qquad n \in \N\cup\{0\}\ ,
\ee 
which can be easily seen from the recurrence relation satisfied by $\Gamma(z)$.
\end{proof}
\begin{definition}
The \emph{shifted particle locations} are
\begin{align}
L_i &:= l_i+K = \lambda_i-i+K &i\geq 1\ ,\\
M_i &:= m_i+K = \mu_i-i+K &i\geq 1\ ,
\end{align}
associated to the partitions $\lambda$ and $\mu$, respectively, where $K$ is a positive integer such that $\ell(\lambda) \leq K$ and $\ell(\mu)\leq K$.
\end{definition}
 
It is well known (see, e.g.~\cite{Fulton_Harris}) that the inverse of \emph{hook product} of $\lambda$ can be written as
\be
\frac{1}{\prod_{(i,j)\in \lambda}h_{\lambda}(i,j)} = \frac{\Delta\left(\{L_i\}_{i=1}^{K}\right)}{\prod_{i=1}^{K}L_i!}\ ,
\ee
where $\Delta$ stands for the Vandermonde product
\be
\Delta\left(\{L_i\}_{i=1}^{K}\right):=\prod_{1\leq i<j\leq K}(L_i-L_j)\ . 
\ee
This implies the following formula, once all necessary substitutions are made in \eqref{eq:full_interaction}.
\begin{lemma}
Given a pair of partitions $\lambda,\mu$ and a positive integer $K$ such that 
\be
\ell(\lambda) \leq K\ , \quad \ell(\mu)\leq K\ ,
\ee
the identity
\be
{\mathcal B}_{\lambda,\mu}([];\sigma) = \Gamma(2\sigma)^{2K}
\Gamma(1-2\sigma)^{2K}\prod_{1\leq i <j \leq 2K}(x_i-x_j)^{2}\prod_{i=1}^{2K}v(x_i)\ ,
\ee
is valid, where
\be
\label{eq:x}
\begin{split}
x_i &= L_i-\sigma\\
x_{K+i}&=M_i+\sigma
\end{split}
\qquad 1\leq i \leq K\ ,
\ee
and
\be
v(z)\equiv v_{\sigma}(z):=\frac{1}{\Gamma(z+1+\sigma)^2\Gamma(z+1-\sigma)^2}\ .
\ee
\end{lemma}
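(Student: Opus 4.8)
The plan is to assemble the statement entirely from the two preceding results, the subtlety being purely in the bookkeeping: the coordinate choice $x_i=L_i-\sigma$, $x_{K+i}=M_i+\sigma$ is precisely what glues the two separate Vandermonde factors and the cross-interaction term into a single $2K$-point Vandermonde.

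First I would note that since $P([];z)\equiv 1$, the coefficient ${\mathcal B}_{\lambda,\mu}([];\sigma)$ factors as a \emph{hook part} $\prod_{(i,j)\in\lambda}h_\lambda^{-2}(i,j)\prod_{(i,j)\in\mu}h_\mu^{-2}(i,j)$ times an \emph{interaction part}, namely the reciprocal of $\prod_{(i,j)\in\lambda}(\lambda'_j+\mu_i-i-j+1+2\sigma)^2\prod_{(i,j)\in\mu}(\mu'_j+\lambda_i-i-j+1-2\sigma)^2$. For the hook part I would apply the stated inverse hook-product formula to both $\lambda$ and $\mu$, obtaining $\Delta(\{L_i\})^2\Delta(\{M_i\})^2\big/\prod_{i=1}^K(L_i!)^2(M_i!)^2$. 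For the interaction part I would invoke the preceding Proposition, equation~\eqref{eq:full_interaction}, with the specialization $x=2\sigma$; this produces the prefactor $\Gamma(2\sigma)^{2K}\Gamma(1-2\sigma)^{2K}$, the Gamma denominators $\prod_{i=1}^K\Gamma(m_i+K+1+2\sigma)^{-2}\Gamma(l_i+K+1-2\sigma)^{-2}$, and the cross term $\prod_{i,j=1}^K(m_i-l_j+2\sigma)^2$.

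Next I would verify that the full Vandermonde $\prod_{1\le i<j\le 2K}(x_i-x_j)^2$, split according to whether both indices lie in the first block, both in the second, or one in each, yields respectively $\Delta(\{L_i\})^2$ (the $\sigma$ shifts cancel within a block), $\Delta(\{M_i\})^2$, and the cross block $\prod_{i,j=1}^K(L_i-M_j-2\sigma)^2$. Since $L_i-M_j=l_i-m_j$ and squaring absorbs the sign, the cross block equals $\prod_{i,j=1}^K(m_i-l_j+2\sigma)^2$ after relabelling, matching the interaction cross term exactly. Then I would evaluate the weight: from $v(z)=\Gamma(z+1+\sigma)^{-2}\Gamma(z+1-\sigma)^{-2}$ one reads off $v(L_i-\sigma)=(L_i!)^{-2}\Gamma(l_i+K+1-2\sigma)^{-2}$ and $v(M_i+\sigma)=\Gamma(m_i+K+1+2\sigma)^{-2}(M_i!)^{-2}$, using $L_i+1=\lambda_i-i+K+1$ and $L_i+1-2\sigma=l_i+K+1-2\sigma$, and symmetrically for $M_i$. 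These reproduce precisely the factorials left over from the hook part together with the Gamma denominators from the interaction part.

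Collecting all factors then completes the identity. I do not expect a genuine obstacle here, since every ingredient is already established; the only point requiring care is the index bookkeeping in the cross term — tracking which block ordering and sign convention send $(m_i-l_j+2\sigma)$ to $(L_i-M_j-2\sigma)$ — but this is harmless because each such factor enters squared, so no sign ambiguity survives.
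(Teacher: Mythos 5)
Your proof is correct and follows essentially the same route as the paper: combine the inverse hook-product formula for $\lambda$ and $\mu$ with the interaction identity \eqref{eq:full_interaction} at $x=2\sigma$, then check that the block decomposition of the $2K$-point Vandermonde and the evaluations $v(L_i-\sigma)=\Gamma(L_i+1)^{-2}\Gamma(L_i+1-2\sigma)^{-2}$, $v(M_i+\sigma)=\Gamma(M_i+1)^{-2}\Gamma(M_i+1+2\sigma)^{-2}$ absorb all remaining factors. The paper's own proof records exactly these two verifications, so no comparison beyond this is needed.
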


\begin{proof}
In terms of the parameters
\be
\begin{split}
x_i &= L_i-\sigma\\
x_{K+i}&=M_i+\sigma
\end{split}
\qquad 1\leq i \leq K\ ,
\ee
it is easy to see that
\be
\Delta\left(\{L_i\}_{i=1}^{K}\right)^2 \prod_{i,j=1}^{K}(M_i-L_j+2\sigma)^2\Delta\left(\{M_i\}_{i=1}^{K}\right)^2 = \prod_{1\leq i <j \leq 2K}(x_i-x_j)^{2}\ .
\ee
Moreover,
\begin{align}
\frac{1}
{\Gamma(L_i+1)^2\Gamma(L_i+1-2\sigma)^2}&=v_{\sigma}(L_i-\sigma)\\
\frac{1}
{\Gamma(M_i+1)^2\Gamma(M_i+1+2\sigma)^2}&=v_{\sigma}(M_i+\sigma)\ .
\end{align}
\end{proof}

\begin{proposition}
Let $\lambda$ be a partition such that $\ell(\lambda)\leq K$. Then
\be
\prod_{(i,j)\in \lambda}P(\vec a; i-j+\sigma)=\frac{G(\vec a;K+1+\sigma)}{G(\vec a;1+\sigma)}\prod_{i=1}^{K}\frac{1}{\Gamma(\vec a;-\l_i+\sigma)}\ .
\ee
\end{proposition}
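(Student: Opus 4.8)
The plan is to reduce the claim to the single-parameter case and then reassemble. Since $P(\vec a;z)$, $\Gamma(\vec a;z)$ and $G(\vec a;z)$ are each defined as a product over $k=1,\dots,p$ of the corresponding scalar object evaluated at $z+a_k$, I would first interchange the product over the boxes $(i,j)\in\lambda$ with the product over $k$. It then suffices to prove, for each fixed $a_k$ and with the abbreviation $c:=\sigma+a_k$, the scalar identity
\[
\prod_{(i,j)\in\lambda}(i-j+c)=\frac{G(K+1+c)}{G(1+c)}\prod_{i=1}^{K}\frac{1}{\Gamma(c-l_i)}\ ,
\]
after which taking the product over $k=1,\dots,p$ restores $P(\vec a;\,\cdot\,)$, $\Gamma(\vec a;\,\cdot\,)$ and $G(\vec a;\,\cdot\,)$ on both sides (note $-l_i+\sigma+a_k=c-l_i$).

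To prove the scalar identity I would evaluate the product row by row. The boxes in the $i$th row are $(i,1),\dots,(i,\lambda_i)$, and the factor $i-j+c$ depends only on $i$ and $j$, so after the substitution $k=i-j$ the row product telescopes:
\[
\prod_{j=1}^{\lambda_i}(c+i-j)=\prod_{k=i-\lambda_i}^{i-1}(c+k)=\frac{\Gamma(c+i)}{\Gamma(c+i-\lambda_i)}=\frac{\Gamma(c+i)}{\Gamma(c-l_i)}\ ,
\]
where I used $\prod_{k=m}^{n}(c+k)=\Gamma(c+n+1)/\Gamma(c+m)$ and $i-\lambda_i=-l_i$. Multiplying over the nonempty rows $i=1,\dots,\ell(\lambda)$ gives $\prod_{i}\Gamma(c+i)/\Gamma(c-l_i)$.

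I would then extend the product to $i=1,\dots,K$ at no cost: for $i>\ell(\lambda)$ one has $\lambda_i=0$, hence $l_i=-i$ and $c-l_i=c+i$, so each extra factor $\Gamma(c+i)/\Gamma(c-l_i)$ equals $1$; this is the same device already used at the end of the proof of Proposition~\ref{prop:interact_half}. Finally, the surviving numerator $\prod_{i=1}^{K}\Gamma(c+i)$ is identified with $G(K+1+c)/G(1+c)$ by telescoping the Barnes functional equation $G(z+1)=\Gamma(z)G(z)$: setting $z=m+c$ yields $G(m+1+c)/G(m+c)=\Gamma(m+c)$, and collapsing the product over $m=1,\dots,K$ produces exactly $G(K+1+c)/G(1+c)$.

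There is no genuine analytic difficulty here—the whole argument is two nested telescopings—so the only point requiring real care is the bookkeeping of index ranges. In particular I must make sure the Gamma-ratio coming from the rows and the Barnes $G$-ratio coming from the functional equation are aligned over the same range $1\le i\le K$, and that the extension from $\ell(\lambda)$ to $K$ is justified precisely by $l_i=-i$ for $i>\ell(\lambda)$. These are the conventions already in force throughout Sec.~\ref{sec:interaction}, so I expect the argument to close without complication.
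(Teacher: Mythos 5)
Your proof is correct and follows essentially the same route as the paper: the row product telescopes to $\Gamma(c+i)/\Gamma(c-l_i)$ (the paper writes this directly in the vector notation via $P(\vec a;z)=\Gamma(\vec a;z+1)/\Gamma(\vec a;z)$ rather than reducing to the scalar case first), the range is extended to $K$ using $l_i=-i$ for $i>\ell(\lambda)$, and the numerator is collapsed with the Barnes functional equation. The reduction to a single $a_k$ is only a cosmetic reorganization of the same argument.
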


\begin{proof}
Since 
\be
P(\vec a;z) = \frac{\Gamma(\vec a;z+1)}{\Gamma(\vec a;z)}\ ,
\ee
we have
\be
\prod_{j=1}^{\lambda_i}P(\vec a;i-j+\sigma) = \frac{\Gamma(\vec{a};i+\sigma)}{\Gamma(\vec{a};-l_i+\sigma)}\ .
\ee
Therefore, 
\be
\prod_{(i,j)\in \lambda}P(\vec a; i-j+\sigma)=\prod_{i=1}^{K}\frac{\Gamma(\vec a ;i+\sigma)}{\Gamma(\vec a;-\l_i+\sigma)}\ ,
\ee
where replacing the upper limit $\ell(\lambda)$ with any $K \geq \ell(\lambda)$ is justified, as seen above. To conclude the proof, note that
\be
\prod_{i=1}^{K}\Gamma(\vec a;i+\sigma) = \frac{G(\vec a;K+1+\sigma)}{G(\vec a;1+\sigma)}\ .
\ee

\end{proof}
\begin{proof}[Proof of Theorem \ref{thm:b}]
First, note that
\be
{\mathcal B}_{\lambda, \mu}(\vec a;\sigma)= {\mathcal B}_{\lambda, \mu}([];\sigma)\prod_{(i,j)\in \lambda}P(\vec a; i-j+\sigma)\prod_{(i,j)\in \mu}P(\vec a; i-j-\sigma)\ .
\ee
In terms of the parameters $(x_1,x_2,\dots, x_{2K})$ defined in \eqref{eq:x},
we have
\begin{align}
\prod_{(i,j)\in \lambda}P(\vec a; i-j+\sigma)&=\frac{G(\vec a;K+1+\sigma)}{G(\vec a;1+\sigma)}\prod_{i=1}^{K}\frac{1}{\Gamma(\vec{a};K-(L_i-\sigma))}\\
\prod_{(i,j)\in \mu}P(\vec a; i-j-\sigma)&=\frac{G(\vec a;K+1-\sigma)}{G(\vec a;1-\sigma)}\prod_{i=1}^{K}\frac{1}{\Gamma(\vec{a};K-(M_i+\sigma))}\ .
\end{align}
Since
\be
w_{\vec{a},K,\sigma}(z) = \frac{1}{\Gamma(\vec{a};K-z)}v_{\sigma}(z)\ ,
\ee
the proposed formula for ${\mathcal B}_{\lambda, \mu}(\vec a;\sigma)$ follows.
\end{proof}

\section{The discrete matrix model}
\label{sec:matrix_model}
In this section we prove that ${\mathcal B}_{K}(\vec a;\sigma;t)$ can be expressed in terms of the partition function of the matrix model associated to the discrete measure $\nu$.
\begin{proof}[Proof of Theorem \ref{thm:matrix_integral}]
In terms of the shifted particle locations $L_i$ and $M_i$ and the parameters
\be
\begin{split}
x_i&= L_i-\sigma\\
x_{K+i}&= M_i+\sigma
\end{split}
\qquad 1\leq i \leq K\ ,
\ee
the exponent of $t$ in the term associated to the pair $(\lambda, \mu)$ can be written as
\be
|\lambda|+|\mu| = \sum_{i=1}^{K}\left(L_i-\sigma +M_i+\sigma \right)-K(K-1)=\sum_{i=1}^{2K}x_i-K(K-1)\ .
\ee
The summation over pairs of partitions $(\lambda,\mu)$ such that $\ell(\lambda)\leq K$ and $\ell(\mu)\leq K$ is equivalent to the summing over all configurations
\be
L_1 > L_2 > \cdots > L_K \geq 0 \quad \mbox{and}\quad  M_1 > M_2 > \cdots > M_K \geq 0\ ,
\ee
and therefore, by Theorem \ref{thm:b}, the $K$-restricted partial sum  is equal to
\begin{align}
&\sum_{\ell(\lambda)\leq K\atop \ell(\mu)\leq K}{\mathcal B}_{\lambda, \mu}(\vec a;\sigma)t^{|\lambda|+|\mu|}\\
&=\frac{Q(\vec a, K, \sigma)}{t^{K(K-1)}}\sum_{L_1 > L_2 > \cdots > L_K \geq 0 \atop M_1 > M_2 > \cdots > M_K \geq 0}\prod_{1\leq i <j \leq 2K}(x_i-x_j)^{2}\prod_{i=1}^{2K}w(x_i)t^{\sum_{i=1}^{2K}x_i}\\
\nonumber
&=\frac{Q(\vec a, K, \sigma)}{t^{K(K-1)}}\frac{1}{K!^2}\sum_{L_1=0}^{\infty}\cdots \sum_{L_K=0}^{\infty}\sum_{M_1=0}^{\infty}\cdots \sum_{M_K=0}^{\infty}\Delta\left({\mathbf x}\right)^2 \prod_{i=1}^{2K}w(x_i)t^{\sum_{i=1}^{2K}x_i}\ .
\end{align}
The last equality is justified since the summand is a symmetric function of the indices $\{L_i\}_{i=1}^{K}$ and $\{M_i\}_{i=1}^{K}$ separately and it vanishes whenever $L_i=L_j$ or $M_i=M_j$ for some $i\not=j$ (guaranteed by the Vandermonde factor). Formally, the multiple sum can be written as a multiple integral in terms of the measure 
\be
\nu\equiv\nu_{\vec a, K,\sigma;q} = q\sum_{k=0}^{\infty}w_{\vec a,K, \sigma}(k+\sigma)\delta_{k+\sigma}+q^{-1}\sum_{k=0}^{\infty}w_{\vec a,K, \sigma}(k-\sigma)\delta_{k-\sigma}
\ee
as
\begin{align}
&\sum_{\ell(\lambda)\leq K\atop \ell(\mu)\leq K}{\mathcal B}_{\lambda, \mu}(\vec a;\sigma)t^{|\lambda|+|\mu|}\\
\label{eq:balance}
&=\frac{Q(\vec a, K, \sigma)}{t^{K(K-1)}}\frac{1}{K!^2}\underbrace{\int_{\N_{0}-\sigma}\cdots \int_{\N_{0}-\sigma}}_{K}\underbrace{\int_{\N_{0}+\sigma}\cdots \int_{\N_{0}+\sigma}}_{K}\Delta\left({\mathbf x}\right)^2 t^{\sum_{i=1}^{2K}x_i}\prod_{i=1}^{2K}d\nu(x_i)\ .
\end{align}
\begin{remark} Note that, despite the dependence of $\nu_{\vec a, K,\sigma;q}$ on $q$, the r.h.s.~of \eqref{eq:balance} does not depend on $q$:  both $q$ and $q^{-1}$ appears $K$ times when the integral is taken on $(\N_{0}-\sigma)^{K}\times( \N_{0}+\sigma)^{K}$.
\end{remark}
The unrestricted multiple integral
\be
\underbrace{\int_{\N_{0}\pm\sigma}\cdots \int_{\N_{0}\pm\sigma}}_{2K}\Delta\left({\mathbf x}\right)^2 t^{\sum_{i=1}^{2K}x_i}\prod_{i=1}^{2K}d\nu(x_i)
\ee
is a Laurent polynomial in the variable $q$, and the definition of the measure $\nu_{q,\sigma}$ implies that
\begin{align}
&\frac{1}{2\pi i} \oint_{|q|=1}\left[\underbrace{\int_{\N_{0}\pm\sigma}\cdots \int_{\N_{0}\pm\sigma}}_{2K}\Delta\left({\mathbf x}\right)^2 t^{\sum_{i=1}^{2K}x_i}\prod_{i=1}^{2K}d\nu(x_i)\right]\frac{dq}{q}\\
&=\binom{2K}{K}
\underbrace{\int_{\N_{0}-\sigma}\cdots \int_{\N_{0}-\sigma}}_{K}\underbrace{\int_{\N_{0}+\sigma}\cdots \int_{\N_{0}+\sigma}}_{K}\Delta\left({\mathbf x}\right)^2 t^{\sum_{i=1}^{2K}x_i}\prod_{i=1}^{2K}d\nu(x_i)\ ,
\end{align}
from which \eqref{eq:restricted_sum_integral} follows.
\end{proof}

\begin{lemma}
\label{lemma:mgf}
The moment generating function
\be
\psi(u) := \int e^{uz} \nu(z)
\ee
of the discrete measure $\nu= \nu_{\vec a, K, q,\sigma}$ can be written as a linear combination of generalized hypergeometric functions in $e^{u}$ as
\begin{align}
\nonumber
\psi(u)&=\frac{qe^{u \sigma}}{\Gamma(\vec a;K-\sigma)\Gamma(1+2\sigma)^2}
{}_p F_3\left(
\begin{array}{c} -\vec a-K+1+\sigma\\1+2\sigma\ \ 1+2\sigma\ \ 1
\end{array}; (-1)^{p}e^{u} \right)\\
&+\frac{q^{-1}e^{-u\sigma}}{\Gamma(\vec a;K+\sigma)\Gamma(1-2\sigma)^2}
{}_p F_3\left(
\begin{array}{c} -\vec a-K+1-\sigma\\1-2\sigma\ \ 1-2\sigma\ \ 1
\end{array}; (-1)^{p}e^{u} \right)\ ,
\end{align}
with the use of the condensed notation \eqref{eq:condensed}.
\end{lemma}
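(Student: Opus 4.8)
The plan is to evaluate the two series defining $\psi(u)$ termwise and recognize each as a generalized hypergeometric series. First I would substitute $z=k+\sigma$ into the weight $w_{\vec a,K,\sigma}$ and simplify the three types of gamma factors: the block $\Gamma(\vec a;K-z)$ becomes $\prod_{i=1}^{p}\Gamma(a_i+K-\sigma-k)$, while $\Gamma(z+1+\sigma)^2=\Gamma(k+1+2\sigma)^2$ and $\Gamma(z+1-\sigma)^2=(k!)^2$. Together with the prefactor $q\,e^{u\sigma}$ and the exponential $e^{uk}$, this writes the first half of $\psi(u)$ as $q\,e^{u\sigma}\sum_{k\geq 0}e^{uk}\big/\big(\prod_{i}\Gamma(a_i+K-\sigma-k)\,\Gamma(k+1+2\sigma)^2\,(k!)^2\big)$.

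The key algebraic step is to turn each reciprocal gamma into a Pochhammer symbol. For the \emph{descending} arguments I would use the elementary consequence of the functional equation $\Gamma(c-k)=(-1)^k\Gamma(c)/(1-c)_k$, equivalently $1/\Gamma(c-k)=(-1)^k(1-c)_k/\Gamma(c)$, applied with $c=a_i+K-\sigma$ for each $i$; this produces the numerator parameters $1-a_i-K+\sigma$ (the list $-\vec a-K+1+\sigma$) together with a factor $(-1)^{pk}$ and the constant $1/\Gamma(\vec a;K-\sigma)$. For the \emph{ascending} arguments I would use $\Gamma(k+\beta)=(\beta)_k\Gamma(\beta)$ on $\Gamma(k+1+2\sigma)^2$, giving $(1+2\sigma)_k^2$ and the constant $\Gamma(1+2\sigma)^2$, and I would split $(k!)^2=(1)_k\,k!$ so that one factor $(1)_k$ joins the lower parameters while the remaining $k!$ is the intrinsic denominator of the hypergeometric series.

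Collecting the constants outside and the Pochhammer symbols inside, and folding $(-1)^{pk}e^{uk}=((-1)^{p}e^{u})^{k}$, the first series becomes exactly $\frac{q\,e^{u\sigma}}{\Gamma(\vec a;K-\sigma)\Gamma(1+2\sigma)^2}$ times the ${}_pF_3$ with upper list $-\vec a-K+1+\sigma$, lower list $1+2\sigma,\,1+2\sigma,\,1$, and argument $(-1)^{p}e^{u}$. The second series is handled identically after the substitution $z=k-\sigma$, which amounts to the symmetry $\sigma\mapsto-\sigma$, $q\mapsto q^{-1}$; indeed $w_{\vec a,K,-\sigma}$ is obtained from $w_{\vec a,K,\sigma}$ by this reflection, so no new computation is needed. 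The only point demanding care is the bookkeeping of the sign $(-1)^{p}$: it is precisely the $p$ copies of $(-1)^k$ generated when the $p$ descending gamma factors are inverted, and it must be absorbed into the argument rather than the parameters. Convergence of the two series (entire for $p\leq 3$, radius-one for $p=4$, and truncating otherwise) is exactly as discussed after the statement, so the identity holds wherever $\psi(u)$ is defined.
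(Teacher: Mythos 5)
Your proof is correct and follows essentially the same route as the paper: expand the two sums defining $\psi(u)$ at $z=k\pm\sigma$ and recognize each as a ${}_pF_3$ with the sign $(-1)^p$ absorbed into the argument. The only cosmetic difference is that you write each coefficient explicitly in Pochhammer form via $1/\Gamma(c-k)=(-1)^k(1-c)_k/\Gamma(c)$, whereas the paper identifies the series by computing the ratio of consecutive terms; these are equivalent bookkeeping choices leading to the same parameter lists.
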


\begin{proof}
It is easy to see that the moment generating function of the discrete measure $\nu$ is
\begin{align}
\psi(u)&=\int e^{ux}d\nu_{K,\vec{a},\sigma,q}(x)\\ &= \sum_{k=0}^{\infty}\left[qw_{\vec a,K,\sigma}(k+\sigma)e^{u(k+\sigma)}+q^{-1}w_{\vec a,K,\sigma}(k-\sigma)e^{u(k-\sigma)}\right]\\
\label{eq:first_sum}
&=qe^{u\sigma}\sum_{k=0}^{\infty}\frac{e^{ku}}{\Gamma(\vec a;K-k-\sigma)\Gamma(k+1+2\sigma)^2\Gamma(k+1)^2}\\
&+q^{-1}e^{-u\sigma}\sum_{k=0}^{\infty}\frac{e^{ku}}{\Gamma(\vec a;K-k+\sigma)\Gamma(k+1-2\sigma)^2\Gamma(k+1)^2}\ .
\end{align}
Since
\begin{align}
&\frac{\Gamma(\vec a;K-k-\sigma)\Gamma(k+1+2\sigma)^2\Gamma(k+1)^2}{\Gamma(\vec a;K-k-1-\sigma)\Gamma(k+2+2\sigma)^2\Gamma(k+2)^2}\\
&\quad=\frac{P(\vec{a};K-k-1-\sigma)}{(k+1+2\sigma)^2(k+1)^2}\\
&\quad =\frac{(-1)^{p}\prod_{i=1}^{p}(k+1+\sigma-K-a_i)}{(k+1+2\sigma)^2(k+1)^2}\ ,
\end{align}
the first sum \eqref{eq:first_sum} can be written as
\begin{align}
&qe^{u\sigma}\sum_{k=0}^{\infty}\frac{e^{ku}}{\Gamma(\vec a;K-k-\sigma)\Gamma(k+1+2\sigma)^2\Gamma(k+1)^2}\\
&\ = \frac{qe^{u\sigma}}{\Gamma(\vec a;K-\sigma)\Gamma(1+2\sigma)^2}
{}_p F_3\left(
\begin{array}{c} -\vec a-K+1+\sigma\\ {[1+2\sigma,1+2\sigma, 1]}
\end{array}; (-1)^{p}e^{u} \right)\ ,
\end{align}
and, similarly,
\begin{align}
&q^{-1}e^{-u\sigma}\sum_{k=0}^{\infty}\frac{e^{ku}}{\Gamma(\vec a;K-k+\sigma)\Gamma(k+1-2\sigma)^2\Gamma(k+1)^2}\\
&\ = \frac{q^{-1}e^{-u\sigma}}{\Gamma(\vec a;K+\sigma)\Gamma(1-2\sigma)^2}
{}_p F_3\left(
\begin{array}{c} -\vec a-K+1-\sigma\\ {[1-2\sigma,1-2\sigma, 1]}
\end{array}; (-1)^{p}e^{u} \right)\ ,
\end{align}
which concludes the proof of the lemma.
\end{proof}

\section{Specialization to $P_{{III}'_3}$}
\label{sec:PIII}
As it was pointed out in Remark \ref{rem:decomp}, the isomonodromic $\tau$-function corresponding to $P_{\mathrm{{III}'_3}}$ is special since the corresponding discrete matrix model can be considered as an \emph{undressed} model, associated to the simplest choice $\vec a = [\ ]$. The measure $\nu$ corresponding to the empty list is particularly simple, leading to a less complicated formula for the $K$-restricted partial sum:
\begin{shaded}
\begin{corollary}
\begin{align}
\sum_{\ell(\lambda)\leq K\atop \ell(\mu)\leq K}{\mathcal B}_{\lambda, \mu}([\ ];\sigma)e^{u(|\lambda|+|\mu|)}&=\frac{\Gamma(2\sigma)^{2K}\Gamma(1-2\sigma)^{2K}}{t^{K(K-1)}}\times\\
&\times
\frac{1}{2\pi i} \oint_{|q|=1}\left[\det\left(\psi^{(i+j-2)}(u)\right)_{i,j=1}^{2K}\right]\frac{dq}{q}\ ,
\end{align}
where 
\begin{align}
\psi(u) &=\frac{qe^{u\sigma}}{\Gamma(1+2\sigma)^2}
{}_0 F_3\left(
\begin{array}{c} \\1+2\sigma\ \ 1+2\sigma\ \ 1
\end{array}; e^{u} \right)\\
&\quad +\frac{q^{-1}e^{-u\sigma}}{\Gamma(1-2\sigma)^2}
{}_0 F_3\left(
\begin{array}{c} \\1-2\sigma\ \ 1-2\sigma\ \ 1
\end{array}; e^{u} \right)\ .
\end{align}
\end{corollary}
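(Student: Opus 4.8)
The plan is to obtain the corollary as a direct specialization of Corollary \ref{cor:hankel} to the empty parameter list $\vec a = [\ ]$, that is, to $p=0$, simplifying every $\vec a$-dependent ingredient by means of the convention that all products \eqref{eq:P}--\eqref{eq:G} over an empty list are identically $1$. First I would evaluate the prefactor. Since $G([\ ];z)\equiv 1$ for every $z$, the ratio of Barnes $G$-functions in $Q(\vec a,K,\sigma)$ collapses to $1$, leaving
\be
Q([\ ],K,\sigma) = \Gamma(2\sigma)^{2K}\Gamma(1-2\sigma)^{2K}\ ,
\ee
which is precisely the prefactor appearing in the claimed identity, alongside the unchanged factor $t^{-K(K-1)}$ and the contour integral in $q$.

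Next I would specialize the moment generating function by setting $p=0$ in Lemma \ref{lemma:mgf}. Using $\Gamma([\ ];z)\equiv 1$ removes the two factors $\Gamma(\vec a;K\mp\sigma)$ from the denominators, while $(-1)^{p}=1$ turns the hypergeometric argument into $e^{u}$ rather than $(-1)^{p}e^{u}$. The upper (numerator) parameter list $-\vec a-K+1\pm\sigma$ becomes empty, so each ${}_pF_3$ reduces to a ${}_0F_3$ with lower parameters $[1\pm 2\sigma,\,1\pm 2\sigma,\,1]$. This reproduces the displayed expression for $\psi(u)$ verbatim, and substituting both simplifications into the statement of Corollary \ref{cor:hankel} then yields the assertion.

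The feature worth emphasizing, and the only genuinely substantive observation, is that the resulting $\psi(u)$ no longer depends on the cutoff $K$. Structurally this is because the weight itself loses its $K$-dependence: with $\vec a=[\ ]$ one has $\Gamma([\ ];K-z)\equiv 1$, whence $w_{[\ ],K,\sigma}(z)=v_\sigma(z)=1/(\Gamma(z+1+\sigma)^2\Gamma(z+1-\sigma)^2)$, manifestly independent of $K$. Equivalently, the empty numerator list leaves both ${}_0F_3$'s and their scalar coefficients $1/\Gamma(1\pm 2\sigma)^2$ free of $K$. All of the $K$-dependence of the $K$-restricted partial sum is thereby isolated in the size $2K\times 2K$ of the Wronskian (Hankel) determinant and in the explicit prefactors $t^{-K(K-1)}$ and $\Gamma(2\sigma)^{2K}\Gamma(1-2\sigma)^{2K}$.

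Finally, there is no analytic obstacle to overcome. Since $p=0\leq 3$, the discussion following the definition of $\psi$ guarantees that $\psi(u)$ is entire in $u$, so every derivative $\psi^{(k)}(u)$ exists for all $u\in\C$ and the hypothesis of Corollary \ref{cor:hankel} holds unconditionally for generic $\sigma$. The hardest part is therefore purely bookkeeping: one must track the cancellation of the various $\Gamma$-power prefactors once the empty-list convention is imposed, which is a routine check.
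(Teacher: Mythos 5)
Your proposal is correct and follows exactly the route the paper intends (the paper states this corollary without a separate proof, treating it as the immediate specialization of Corollary \ref{cor:hankel} and Lemma \ref{lemma:mgf} to $\vec a=[\ ]$, i.e.\ $p=0$). Your observation that $\psi(u)$ loses its $K$-dependence because $\Gamma([\ ];K-z)\equiv 1$ is precisely the point the paper itself emphasizes for the $P_{\mathrm{III}'_3}$ case.
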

\end{shaded}

\section{Conclusion}
We have found a closed expression of the $K$-restricted partial sum ${\mathcal B}_{K}(\vec{a};\sigma;t)$ of the conformal block ${\mathcal B}(\vec{a};\sigma;t)$ in terms of the partition function of an associated discrete matrix model.

This representation can be useful when the asymptotic behaviour of ${\mathcal B}_{K}(\vec{a};\sigma)$ is considered as $K \to \infty$, in light of the recent development on Riemann--Hilbert techniques for orthogonal polynomials with respect to discrete measures (see, e.g. \cite{bleher_liechty,discrete_op}), combined with techniques developed for the asymptotic analysis of random partitions (see \cite{eynard_all_order_partitions}).
It must be noted, however, that the measure $\nu$ is complex, and in order to perform a Riemann--Hilbert analysis for discrete orthogonal polynomials with complex weights one needs to overcome several difficulties. Also, to evaluate the limiting behaviour of the contour integral in \eqref{eq:restricted_sum_integral}, the asymptotics have to be uniform in $q$ as $K \to \infty$, which might lead to other complications.

Also, the representation of the length-restricted partial sums of the Painlev\'e conformal blocks are expressed in terms of Wronskians of generalized hypergeometric functions. This could allow to re-derive known particular solutions to Painlev\'e equations that are given in terms of their associated $\tau$-functions or to discover new solutions of similar nature finding the large $K$ limit of the conformal blocks and perfoming the summation in \eqref{eq:tau_expansion} (as seen in \cite{instanton_combinatorics}).

\subsection*{Acknowledgements} The author would like to thank D.~Guzzetti, D.~Yang, T.~Grava, B.~Dubrovin and A.~Tanzini for the helpful and stimulating discussions, and also the referees for their valuable comments and suggestions.

The present work was supported by the FP7 IRSES project RIMMP (\emph{Random and Integrable models in Mathematical Physics 2010-2014}),
the ERC project FroM-PDE (\emph{Frobenius Manifolds and Hamiltonian Partial Differential Equations  2009-13}) and the 
MIUR Research Project  \emph{Geometric and analytic theory of Hamiltonian systems in finite and infinite dimensions}.

\bibliography{painleve_tau_function_2}
\bibliographystyle{habbrv}
\end{document}